\documentclass[11pt,twoside]{amsart}
\usepackage{latexsym,amssymb,amsmath,curves}

\textwidth=16.00cm
\textheight=22.00cm
\topmargin=0.00cm
\oddsidemargin=0.00cm 
\evensidemargin=0.00cm
\headheight=0cm
\headsep=1cm
\headsep=0.5cm 
\numberwithin{equation}{section}
\hyphenation{semi-stable}
\setlength{\parskip}{3pt}

\newtheorem{theorem}{Theorem}[section]

\newtheorem{corollary}[theorem]{Corollary}

\theoremstyle{definition}

\newtheorem{remark}[theorem]{Remark}

\begin{document}


\title[Projective Parameterized Linear Codes ]{Projective parameterized linear codes arising from some matrices and their main parameters} 

\author[Manuel G. Sarabia]{Manuel Gonz\'alez Sarabia}
\address{
Departamento de Ciencias B\'asicas\\
Unidad Profesional Interdisciplinaria
en Ingenier\'{\i}a y Tecnolog\'{\i}as Avanzadas\\ 
Instituto Polit\'ecnico Nacional\\ 
07340, M\'exico, D.F.
}\email{mgonzalezsa@ipn.mx}

\author[Carlos Renter\'{\i}a M.]{Carlos Renter\'{\i}a M\'arquez}
\address{
Departamento de Matem\'aticas\\
Escuela Superior de F\'{\i}sica y Matem\'aticas\\ 
Instituto Polit\'ecnico Nacional\\ 
07300, M\'exico, D.F.
}\email{renteri@esfm.ipn.mx}

\author[Eliseo Sarmiento R.]{Eliseo Sarmiento Rosales}
\address{
Departamento de
Matem\'aticas\\
Centro de Investigaci\'on y de Estudios
Avanzados del
IPN\\
Apartado Postal
14--740 \\
07000, M\'exico, D.F.
}
\email{esarmiento@math.cinvestav.mx}

\thanks{The first two authors are partially supported by COFAA--IPN and SNI--SEP. 
The third author is partially supported by CONACyT}

 
\subjclass[2000]{Primary 13P25; Secondary 14G50, 14G15, 11T71, 94B27, 94B05.}

\begin{abstract}
In this paper we will estimate the main parameters of some eva\-lua\-tion codes which are known as projective parameterized codes. We will find the length of these codes and we will give a formula for the dimension in terms of the Hilbert function associated to two ideals, one of them being the vanishing ideal of the projective torus. Also we will find an upper bound for the minimum distance and, in some cases, we will give some lower bounds for the regularity index and the minimum distance. These lower bounds work in several cases, particularly for any projective parameterized code associated to the incidence matrix of uniform clutters and then they work in the case of graphs.
\end{abstract}

\maketitle

\section{Introduction}
Let $K=\mathbb{F}_q$  be a finite field with $q$ elements and $L=K[Z_1,\ldots,Z_n]$ be a polynomial ring over the field $K$.
Let $Z^{a_1},\ldots,Z^{a_m}$ be a finite set of monomials.  
As usual if $a_i=(a_{i1},\ldots,a_{in})\in\mathbb{N}^n$, where $\mathbb{N}$ stands for the non-negative integers, 
then we set
 
\begin{center}
$Z^{a_i}=Z_1^{a_{i1}}\cdots Z_n^{a_{in}} \, \, \mbox{for all} \, \, i=1,\ldots,m$.
\end{center}

Consider the following set
parameterized  by these mo\-no\-mials 
\begin{equation} \label{toricset}
X=\left\{\left[\left(t_1^{a_{11}}\cdots t_n^{a_{1n}},\ldots,t_1^{a_{m1}}\cdots t_n^{a_{mn}}\right)\right]\in\mathbb{P}^{m-1}	\vert\, t_i\in K^*\right\},
\end{equation}
where $K^*=K\setminus\{0\}$ and $\mathbb{P}^{m-1}$ is a projective
space over the field $K$. 
Following \cite{afinetv} we call $X$ an 
algebraic toric set parameterized  by
$Z^{a_1},\ldots,Z^{a_m}$. The set $X$ is a multiplicative group under
componentwise multiplication. 

In the same way, let $A$ be the $n \times m$ matrix given by
\begin{equation}
\begin{pmatrix} \label{matrix}
a_{11} & a_{21} & \cdots & a_{m1} \\
a_{12} & a_{22} & \cdots & a_{m2} \\
\vdots & \vdots & \vdots & \vdots \\
a_{1n} & a_{2n} & \cdots & a_{mn}
\end{pmatrix}.
\end{equation}
We say that the set defined in (\ref{toricset}) is the algebraic toric set associated to the matrix $A$. We note that 
$$[(t_1^{a_{11}}\cdots t_{n}^{a_{1n}},t_1^{a_{21}}\cdots t_n^{a_{2n}},\ldots, t_1^{a_{m1}}\cdots t_n^{a_{mn}})]=
[(1,t_1^{a_{21}-a_{11}}\cdots t_n^{a_{2n}-a_{1n}},\ldots,t_1^{a_{m1}-a_{11}}\cdots t_n^{a_{mn}-a_{1n}})].$$
By taking $b_{ij}=a_{ij}-a_{1j}$ for all $i=2,\ldots,m$, $j=1,\ldots,n$, we obtain 
\begin{equation} \label{realtoricset}
X=\left\{\left[\left(1,t_1^{b_{21}}\cdots t_n^{b_{2n}},\ldots,t_1^{b_{m1}}\cdots t_n^{b_{mn}}\right)\right] \in {\mathbb{P}}^{m-1}: t_i \in K^*\right\}.
\end{equation}

From now on we will use any of the representations (\ref{toricset}) or (\ref{realtoricset}) to mean the algebraic toric set parameterized by the monomials $Z^{a_1},\ldots,Z^{a_{m}}$ or, in an equivalent way, to represent the algebraic toric set associated to the matrix (\ref{matrix}).

Let
$S=K[X_1,\ldots,X_m]=\oplus_{d=0}^\infty S_d$ 
be a polynomial ring 
over the field $K$ with the standard grading, let $[P_1],\ldots,[P_{|X|}]$
be the points of $X$, and 
let $f_0(X_1,\ldots,X_m)=X_1^d$. The evaluation map
\setlength{\arraycolsep}{0.0em} 
\begin{eqnarray} \label{ev-map}
& {\rm ev}_d\colon S_d=K[X_1,\ldots,X_m]_d\rightarrow K^{|X|},\nonumber\\
& f\mapsto \left(\frac{f(P_1)}{f_0(P_1)},\ldots,\frac{f(P_{|X|})}{f_0(P_{|X|})}\right)
\end{eqnarray}
\setlength{\arraycolsep}{5pt}
\noindent defines a linear map of
$K$-vector spaces. The image of ${\rm ev}_d$, denoted by $C_X(d)$,
defines a linear code. We will call
$C_X(d)$ a projective parameterized code of
order $d$ arising from the toric set $X$ or associated to the matrix $A$. As usual by a linear code we mean a linear subspace of
$K^{|X|}$. 

In this paper we will only
deal with projective parameterized codes arising from the set $X$, defined in (\ref{toricset}) or (\ref{realtoricset}), over finite fields and we will describe their main characteristics.

The dimension and length of the code $C_X(d)$ 
are given by $\dim_K C_X(d)$ and $|X|$ respectively. The dimension
and length 
are two of the basic parameters of a linear code. A third
basic parameter is the minimum
distance which is given by 
$$\delta_X(d)=\min\{\|v\|
\colon 0\neq v\in C_X(d)\},$$ 
where $\|v\|$ is the number of non-zero
entries of $v$. The basic parameters of $C_X(d)$ are related by the
Singleton bound which is an upper bound for the minimum distance
$$
\delta_X(d)\leq |X|-\dim_KC_X(d)+1.
$$
Projective parameterized codes are important because in some cases their main parameters have the best behavior. For example in \cite{GRS} the resulting  codes are MDS.

The parameters of evaluation codes over finite fields have been computed in several cases. Our approximation, when we consider the evaluation codes as associated to the matrix (\ref{matrix}), generalizes many cases studied previously. For example if $A=I_m$, the projective parameterized codes associated to $A$ become the Generalized Reed-Solomon codes \cite{GRH}. If $X=\mathbb{P}^{m-1}$,  
the parameters of $C_X(d)$ are described in
\cite[Theorem~1]{sorensen}. If $X$ is the image of the affine space
$\mathbb{A}^{m-1}$ under the map $\mathbb{A}^{m-1}\rightarrow
\mathbb{P}^{m-1} $, $x\mapsto [(1,x)]$, the parameters 
of $C_X(d)$ are described in
\cite[Theorem~2.6.2]{delsarte-goethals-macwilliams}. Also if we consider the matrix (\ref{matrix}) as the incidence matrix of a graph $G$, we obtain the projective parameterized codes associated to $G$. In the following sections when we write graph we mean a simple graph, i.e., an undirected graph that has no loops and no more than one edge between any two different vertices. The main characteristics of evaluation codes associated to complete bipartite graphs were found in \cite{GR}. Some general results over projective parameterized codes were described in \cite{algcodes}. 

It is worth saying that projective parameterized codes are, in general, strictly different to toric codes which were defined in \cite{hansen} and generalized for example in \cite{joyner} and \cite{soprunov}. They evaluate over the complete torus, meanwhile we do it over specific subsets of the projective space.

In this work we will analyze the case where the pa\-ra\-me\-te\-ri\-zed codes of order $d$, $C_X(d)$, come from the general matrix (\ref{matrix}) and we will estimate their main parameters.

The vanishing ideal of $X$, denoted by $I_X$, is the
ideal of $S$ generated by the homogeneous polynomials of $S$ that
vanish on $X$. 

For all unexplained
terminology and additional information  we refer to
\cite{AL,Sta1} (for the theory of polynomial ideals and Hilbert functions), and
\cite{MacWilliams-Sloane,stichtenoth,tsfasman} (for the theory of 
error-correcting codes and algebraic geometric codes).

\section{Preliminaries} \label{sec2}
We continue using the notation and definitions given in the
introduction. In this section we introduce the basic algebraic
invariants of $S/I_X$ and their co\-nnec\-tion with the basic 
parameters of projective parameterized linear codes. Then we present some of the
results that we are going to use later. 

Recall that the projective space of 
dimension $m-1$ over $K$, denoted by 
$\mathbb{P}^{m-1}$, is the quotient space 
$$(K^{m}\setminus\{0\})/\sim $$
where two points $\alpha$, $\beta$ in $K^{m}\setminus\{0\}$ 
are equivalent if $\alpha=\lambda{\beta}$ for some $\lambda\in K$. We
denote the  
equivalence class of $\alpha$ by $[\alpha]$. Let
$X\subset\mathbb{P}^{m-1}$ be an algebraic toric set
parameterized by $Z^{a_1},\ldots,Z^{a_m}$ and let $C_X(d)$ be a
projective parameterized code of order $d$. The kernel of the
evaluation map ${\rm ev}_d$, defined in 
Eq.~(\ref{ev-map}), is precisely $I_X(d)$ the degree
$d$ piece of $I_X$. Therefore there is an isomorphism of $K$-vector spaces
$$S_d/I_X(d)\simeq C_X(d).$$

Two of the basic parameters of $C_X(d)$ can be expressed 
using Hilbert functions of standard graded algebras \cite{Sta1}, as we
now  explain. Recall that the
Hilbert function of
$S/I_X$ is given by 

\begin{center}
$H_X(d)=\dim_K 
(S/I_X)_d=\dim_K 
S_d/I_X(d)=\dim_KC_X(d)$.
\end{center}

The unique polynomial $h_X(t)=\sum_{i=0}^{k-1}c_it^i\in
\mathbb{Z}[t]$ of degree $k-1=\dim(S/I_X)-1$ such that $h_X(d)=H_X(d)$ for
$d\gg 0$ is called the Hilbert polynomial of $S/I_X$. The
integer $c_{k-1}(k-1)!$, denoted by ${\rm deg}(S/I_X)$, is 
called the degree or multiplicity of $S/I_X$. In our
situation 
$h_X(t)$ is a non-zero constant because $S/I_X$ has dimension $1$. 
Furthermore $h_X(d)=|X|$ for $d\geq |X|-1$, see \cite[Lecture
13]{harris}. This means that $|X|$ equals the degree 
of $S/I_X$. Thus $H_X(d)$ and ${\rm deg}(S/I_X)$ equal the
dimension and the length of $C_X(d)$ res\-pec\-ti\-ve\-ly. There are algebraic
methods, based on elimination theory and Gr\"obner bases, to compute
the dimension and the length of $C_X(d)$ \cite{algcodes}. 

The regularity index of $S/I_X$, denoted by 
${\rm reg}(S/I_X)$, is the least integer $p\geq 0$ such that
$h_X(d)=H_X(d)$ for $d\geq p$. The degree and the regularity index can be
read off the Hilbert series as we now explain. The Hilbert series of
$S/I_X$ can be 
written as
$$
F_X(t)=\sum_{i=0}^{\infty}H_X(i)t^i=\sum_{i=0}^{\infty}\dim_K(S/I_X)_it^i=
\frac{h_0+h_1t+\cdots+h_rt^r}{1-t},
$$
where $h_0,\ldots,h_r$ are positive integers. In fact we have that 
$h_i=\dim_K(S/(I_X,X_m))_i$ for $0\leq i\leq r$ and
$\dim_K(S/(I_X,X_m))_i=0$ for $i>r$.  
This follows from the fact that $I_X$ is a Cohen-Macaulay lattice
ideal \cite{algcodes} and by observing that $\{X_m\}$ is a regular system of
parameters for $S/I_X$ (see \cite{Sta1}). The number 
$r$  equals the regularity index of $S/I_X$ and the degree of
$S/I_X$ equals $h_0+\cdots+h_r$ (see \cite{Sta1} or 
\cite[Corollary~4.1.12]{monalg}). 

The regularity index plays a very important role in the study of evaluation codes arising from a set $X$ because in the case $d \geq {\rm reg} \, (S/I_X)$ we obtain that $H_X(d)=|X|$ and then $C_X(d)=K^{|X|}$, which is a trivial case. Therefore we always work with $0 \leq d < {\rm reg} \, (S/I_X)$. Another motivation to study the regularity index comes from commutative algebra because, in this case, ${\rm reg} \, (S/I_X)$ is equal to the Castelnuovo--Mumford regularity which is an algebraic invariant of central importance \cite{eisenbud}.

\section{Main Results} From now on we will work with the toric set $X$ defined in Eqs. (\ref{toricset}) or (\ref{realtoricset}) and our goal is to describe the main parameters of the projective parameterized codes of order $d$, $C_X(d)$, which were defined as the image of the evaluation map ${\rm ev}_d$ introduced in Eq. (\ref{ev-map}).

\subsection{Length} \label{length1}
In order to cumpute the length of the projective parameterized codes arising from the toric set $X$, we introduce the following multiplicative subgroups of $X$ for all $i=1,\ldots,n$.
$$
Y_i:=\{[(1,t_i^{b_{2i}},\ldots,t_i^{b_{mi}})] \in \mathbb{P}^{m-1}: t_i \in K^* \, \, {\mbox{for all}}\, \, i\}.
$$

It is easy to see that $|Y_i|=\frac{q-1}{(q-1,b_{2i},\ldots,b_{mi})}$ for all $i=1,\ldots,n$ and where $(q-1,b_{2i},\ldots,b_{mi})$ means the greatest common divisor of the corresponding integers. With this information we are able to prove the main result of this section.

\begin{theorem} \label{theorem-length}
The length of the projective parameterized codes of order $d$, $C_X(d)$, is given by
\begin{equation} \label{length}
|X|=\frac{1}{|M|} \prod_{i=1}^n |Y_i|
\end{equation}

\noindent where $M$ is the set of $n$-tuples $(i_1,\ldots,i_{n})$ such that

\begin{center}
$1 \leq i_j \leq \frac{q-1}{(q-1,b_{2j},\ldots,b_{mj})}$ \,\, for all \, $j=1,\ldots,n$,
\end{center}

\noindent and
\setlength{\arraycolsep}{0.0em}
\begin{eqnarray} \label{system}
& i_1b_{21}+ i_2b_{22}+ \cdots + i_nb_{2n} \equiv 0 \, {\rm mod} \, (q-1) \nonumber\\
& i_1b_{31}+ i_2b_{32}+ \cdots + i_nb_{3n} \equiv 0 \, {\rm mod} \, (q-1) \nonumber\\
& \cdots \, \cdots \, \cdots \, \cdots \,\cdots  \cdots \, \cdots \, \cdots \, \cdots \, \cdots \, \cdots \, \cdots \\
& i_1b_{m1} + i_2b_{m2}+\cdots + i_nb_{mn} \equiv 0 \, {\rm mod} \, (q-1) \nonumber
\end{eqnarray}
\setlength{\arraycolsep}{5pt}
\end{theorem}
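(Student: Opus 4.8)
The plan is to realize $X$ as the image of the group homomorphism $\varphi\colon (K^*)^n \to \mathbb{P}^{m-1}$ sending $(t_1,\dots,t_n)$ to $[(1,t_1^{b_{21}}\cdots t_n^{b_{2n}},\dots,t_1^{b_{m1}}\cdots t_n^{b_{mn}})]$, so that $|X| = |(K^*)^n| / |\ker\varphi| = (q-1)^n/|\ker\varphi|$. The subgroups $Y_i$ are exactly the images of the coordinate subgroups, and each $|Y_i| = (q-1)/(q-1,b_{2i},\dots,b_{mi})$ is the order of the image of $K^*$ under $t \mapsto [(1,t^{b_{2i}},\dots,t^{b_{mi}})]$; equivalently, writing $K^*$ cyclically, $|Y_i|$ is the index in $\mathbb{Z}/(q-1)$ of the subgroup of exponents $j$ with $j b_{ki} \equiv 0 \pmod{q-1}$ for all $k$. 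So $\prod_{i=1}^n |Y_i|$ is the size of the direct product $Y_1 \times \cdots \times Y_n$, and I want to exhibit $X$ as a quotient of that product with kernel of size $|M|$.

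First I would fix a generator $\beta$ of the cyclic group $K^*$ and reparameterize: writing $t_i = \beta^{j_i}$ with $j_i$ ranging over $\mathbb{Z}/(q-1)$, a point of $X$ is $[(1,\beta^{\sum_i j_i b_{2i}},\dots,\beta^{\sum_i j_i b_{mi}})]$. Reducing each $j_i$ modulo $(q-1)/(q-1,b_{2i},\dots,b_{mi})$ changes none of the coordinates (that is precisely the definition of the gcd), so without loss of generality $(j_1,\dots,j_n)$ ranges over the index set for $M$, i.e. $1 \le j_i \le (q-1)/(q-1,b_{2i},\dots,b_{mi})$. This gives a surjection from a set of size $\prod_{i=1}^n |Y_i|$ onto $X$. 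Next I would show that two tuples $(j_1,\dots,j_n)$ and $(j_1',\dots,j_n')$ give the same point of $X$ if and only if their difference $(i_1,\dots,i_n) := (j_1 - j_1',\dots,j_n-j_n')$ satisfies the homogeneous congruence system (\ref{system}): indeed the points agree iff $\sum_i j_i b_{ki} \equiv \sum_i j_i' b_{ki} \pmod{q-1}$ for every $k = 2,\dots,m$, which is exactly $\sum_i i_j b_{kj} \equiv 0 \pmod{q-1}$. Hence the fibers of the surjection are the cosets of the solution set of (\ref{system}) inside the group $\prod_i \mathbb{Z}/|Y_i|$, all of the same cardinality $|M|$, and $|X| = \frac{1}{|M|}\prod_{i=1}^n |Y_i|$.

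The one subtle point — and the step I expect to need the most care — is the reduction that lets me replace the free range $j_i \in \mathbb{Z}/(q-1)$ by the restricted range $1 \le j_i \le |Y_i|$ while keeping the description of $M$ as solutions of (\ref{system}) \emph{within that restricted box} rather than within $\prod_i \mathbb{Z}/|Y_i|$ as an abstract group. One must check that the natural map from $\{(i_1,\dots,i_n) : 1\le i_j \le |Y_j|,\ \text{(\ref{system})}\}$ to the subgroup of $\prod_i \mathbb{Z}/|Y_i|$ cut out by (\ref{system}) is a bijection; this is routine once one observes that the congruences (\ref{system}) are well-defined modulo each $|Y_i|$ in the $i$-th slot (since $b_{ki}$ is divisible by $(q-1)/|Y_i|$ times a unit... more precisely $|Y_i| b_{ki} \equiv 0 \pmod{q-1}$ for all $k$, so changing $i_i$ by $|Y_i|$ preserves each congruence), and then $M$ is precisely a complete set of coset representatives for the solution subgroup, so $|M|$ is its order. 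Plugging this into the fiber count completes the proof.
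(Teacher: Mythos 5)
Your proposal is correct and follows essentially the same route as the paper: both realize $X$ as a quotient of the group $Y_1\times\cdots\times Y_n$ of order $\prod_{i=1}^n|Y_i|$ and identify the kernel (equivalently, the common fiber size) with the solution set $M$ of the congruence system via a generator $\beta$ of $K^*$. Your explicit check that the congruences are well defined modulo each $|Y_i|$, so that $M$ is genuinely a set of representatives for the solution subgroup, is a point the paper's proof passes over silently, but the underlying argument is the same.
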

\begin{proof}
Let $\phi$ be the following map
\begin{center} \label{phi}
$\phi: Y_1 \times \cdots \times Y_n \rightarrow X,$
\end{center}

\begin{center}
$\phi([(1,t_1^{b_{21}},\ldots,t_1^{b_{m1}})],\ldots,[(1,t_n^{b_{2n}},\ldots,t_n^{b_{mn}})])=[(1,t_1^{b_{21}} \cdots t_n^{b_{2n}},\ldots,t_1^{b_{m1}} \cdots t_n^{b_{mn}})]$.
\end{center}

It is immediate that $\phi$ is an epimorphism between multiplicative groups. Thus
$$
 |X|=\frac{|Y_1 \times \cdots \times Y_n|}{|{\rm ker} \, \phi|}=\frac{1}{{|{\rm ker} \, \phi|}} \prod_{i=1}^n |Y_i|.
$$

Let $\beta$ a generator of $(K^*,\cdot)$. Therefore

\begin{center}
${\rm ker} \, \phi=\{([(1,\beta^{i_1b_{21}},\ldots,\beta^{i_1b_{m1}})],\ldots,[(\beta^{i_nb_{2n}},\ldots,\beta^{i_nb_{mn}})]) \in Y_1 \times \cdots \times Y_n : [(1,\beta^{i_1b_{21}+\cdots+i_nb_{2n}},\ldots,\beta^{i_{1}b_{m1}+\cdots+i_nb_{mn}})]=[(1,1,\ldots,1)]\}$.
\end{center}

In this case $\beta^{i_1b_{21}+\cdots+i_nb_{2n}}=1,\ldots,\beta^{i_1b_{m1}+\cdots+i_nb_{mn}}=1$. These equalities imply the system of congruences (\ref{system}). Then there is a bijection between ${\rm ker} \, \phi$ and the set of $n-$tuples $(i_1,\ldots,i_n)$ such that $1 \leq i_j \leq |Y_j|$ for all $j=1,\ldots,n$ and satisfy (\ref{system}).

The Eq. (\ref{length}) follows immediately from last results. 
\end{proof}

We define the projective torus of dimension $m-1$ as
\begin{equation}
\mathbb{T}_{m-1}=\{[(c_1,\ldots,c_m)] \in {\mathbb{P}}^{m-1} : c_i \in K^* \, \, \mbox{for all} \, i\}.
\end{equation}

Obviously, $X \subseteq \mathbb{T}_{m-1}$. The following corollary is an easy consequence of Theorem \ref{theorem-length}. It gives the conditions under which last inclusion becomes an equality.

\begin{corollary}
If $n= m$ then $X$ is the projective torus of dimension $m-1$ if and only if $|M|=1$ and $(q-1,b_{2j},\ldots,b_{mj})=1$ for all $j=1,\ldots,n$.
\end{corollary}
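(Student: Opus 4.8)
The plan is to apply Theorem~\ref{theorem-length} directly with $n=m$, keeping in mind that $|X| = |\mathbb{T}_{m-1}|$ is the only constraint we need to track, since $X \subseteq \mathbb{T}_{m-1}$ always holds and equality of finite sets is equivalent to equality of cardinalities. First I would compute $|\mathbb{T}_{m-1}|$: a point $[(c_1,\ldots,c_m)]$ with all $c_i \in K^*$ can be normalized by dividing through by $c_1$, so $\mathbb{T}_{m-1}$ is in bijection with $(K^*)^{m-1}$ and hence $|\mathbb{T}_{m-1}| = (q-1)^{m-1}$. So the goal becomes showing that $|X| = (q-1)^{m-1}$ if and only if the two stated conditions hold.

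Next I would invoke the length formula (\ref{length}), which for $n=m$ reads
\[
|X| = \frac{1}{|M|}\prod_{i=1}^{m} |Y_i|, \qquad |Y_i| = \frac{q-1}{(q-1,b_{2i},\ldots,b_{mi})}.
\]
Since each $|Y_i| \le q-1$ and $|M| \ge 1$, we get $|X| \le (q-1)^{m}/|M|$. But there is a subtlety: the product has $m$ factors of size at most $q-1$, yet the target is $(q-1)^{m-1}$, one power short. The resolution is that $|M| \ge q-1$ always when $n=m$: indeed, the $n$-tuples of the form $(i_1,\ldots,i_m)$ with $i_1 = i_2 = \cdots = i_m = k \cdot (\text{something})$... — more carefully, recall that $(q-1,q-1,\ldots) $ forces $M$ to contain at least the "diagonal" solutions. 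Actually the cleanest way is this: reducing the representation (\ref{realtoricset}) back to (\ref{toricset}), one sees that when $n=m$ the matrix $A$ has a column corresponding to $Z^{a_1}$, and the substitution $t_i \mapsto \beta t_i$ acts on $X$; tracking the kernel, the scaling ambiguity in $\mathbb{P}^{m-1}$ itself contributes a factor of $q-1$ to $|\ker\phi| = |M|$. I would therefore first establish the clean lemma that $|M| \ge q-1$ when $n = m$, by exhibiting $q-1$ explicit elements of $M$ (the tuples $(j,j,\ldots,j)$ paired with the fact that $\sum_i i_i b_{ki} = j\sum_i b_{ki}$, and $\sum_i b_{ki} = \sum_i(a_{ki}-a_{1i})$, which need not vanish — so this needs care and is the main obstacle).

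Granting that $|M| \ge q-1$, the inequality chain $|X| = \frac{1}{|M|}\prod|Y_i| \le \frac{1}{q-1}(q-1)^m = (q-1)^{m-1}$ holds, with equality forcing simultaneously $|M| = q-1$ and $|Y_i| = q-1$ for every $i$, i.e. $(q-1,b_{2i},\ldots,b_{mi}) = 1$ for all $i = 1,\ldots,m$. Conversely, if all those gcd's equal $1$ then $\prod|Y_i| = (q-1)^m$, and if moreover $|M| = q-1$ then $|X| = (q-1)^{m-1} = |\mathbb{T}_{m-1}|$, giving $X = \mathbb{T}_{m-1}$. The one gap to reconcile is that the corollary states the condition as "$|M|=1$", not "$|M|=q-1$" — so the bookkeeping must be whether $M$ in the statement is indexed by the reduced tuples or whether a normalization has been absorbed; I would check the definition of $M$ in Theorem~\ref{theorem-length} against the kernel computation in its proof, and if the paper's $M$ already quotients out the overall scaling then "$|M|=1$" is exactly "$|\ker\phi| = q-1$ in my normalization," making the two phrasings consistent. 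The main obstacle is precisely this normalization accounting: getting the power of $q-1$ to match requires being scrupulous about the projective equivalence built into both $X$ and $\mathbb{T}_{m-1}$, and about which of the $m$ coordinates is being used as the normalizing one.
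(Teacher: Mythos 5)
You have put your finger on the right issue, and the ``main obstacle'' you flag at the end is not a normalization you failed to track --- it is a genuine defect that cannot be repaired under the paper's definition of $M$. (The paper gives no proof at all here, only ``easy consequence of Theorem~\ref{theorem-length}.'') With the definition in Theorem~\ref{theorem-length}, $M$ is in bijection with $\ker\phi$, so $|M|=\bigl(\prod_{i=1}^{n}|Y_i|\bigr)/|X|$. If $n=m$ and $(q-1,b_{2j},\dots,b_{mj})=1$ for every $j$, then $\prod_{i}|Y_i|=(q-1)^{m}$ while $|X|\le|\mathbb{T}_{m-1}|=(q-1)^{m-1}$, so $|M|\ge q-1$ automatically, with equality exactly when $X=\mathbb{T}_{m-1}$. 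This is also the clean proof of the lower bound on $|M|$ that you were trying to extract from diagonal tuples; as you yourself suspected, the diagonal argument fails in general because $\sum_{j}b_{kj}=\sum_j(a_{kj}-a_{1j})$ need not vanish unless the columns of $A$ have constant sum. Consequently the correct condition is $|M|=q-1$, not $|M|=1$: as printed, the ``if'' direction is vacuous for $q>2$ (the hypothesis is unsatisfiable) and the ``only if'' direction is false. A counterexample already implicit in the paper: take $A=I_m$, or an odd cycle with $n=m$; by Remark~\ref{remark1} one gets $|M|=q-1$ and $|X|=(q-1)^{n-1}=|\mathbb{T}_{m-1}|$, so $X$ is the projective torus even though $|M|\ne 1$.

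Your proposed escape --- that the paper's $M$ might already quotient out the overall projective scaling --- is not available: the proof of Theorem~\ref{theorem-length} sets up an explicit bijection between $M$ and $\ker\phi$ with no such quotient, and the paper's own computation $|M|=6=q-1$ in Example~1 confirms that nothing has been absorbed. So your proposal is not a complete proof as written, but the unresolved gap lies in the statement rather than in your strategy: once the condition is corrected to $|M|=q-1$, the inequality chain you set up (each $|Y_i|\le q-1$, $|M|\ge q-1$, and equality throughout if and only if $|X|=(q-1)^{m-1}$) proves both directions.
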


On the other hand if we consider the case where the monomials that parameterize the toric set $X$ are all of them of the same degree then we obtain another corollary.

\begin{corollary} \label{cor2}
If the sum of the elements of each column of the matrix $A$ defined in (\ref{matrix}) is a constant or, equivalently, the monomials that parameterize the toric set $X$ are all of them of the same degree, then $|X| \leq (q-1)^{n-1}$.
\end{corollary}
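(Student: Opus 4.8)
The plan is to squeeze out of the homogeneity hypothesis the single extra power of $q-1$ that it forces to disappear from the naive bound $|X|\le (q-1)^n$. First I would record that "the sum of the entries of each column of $A$ is constant'' says precisely that $\deg Z^{a_i}=\sum_{j=1}^{n}a_{ij}$ is the same for every $i=1,\ldots,m$; writing $\delta\ge 0$ for this common value gives the asserted equivalence with "all the parameterizing monomials have the same degree''. Next I would introduce the parameterizing map
\[
\psi\colon (K^*)^n\longrightarrow X,\qquad \psi(t_1,\ldots,t_n)=\bigl[\bigl(t_1^{a_{11}}\cdots t_n^{a_{1n}},\ldots,t_1^{a_{m1}}\cdots t_n^{a_{mn}}\bigr)\bigr],
\]
and observe, exactly as in the proof of Theorem~\ref{theorem-length}, that componentwise multiplication is well defined on $\mathbb{T}_{m-1}\supseteq X$ and that $\psi$ is an epimorphism of multiplicative groups. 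Hence $|X|=(q-1)^n/|{\rm ker}\,\psi|$, and the whole statement reduces to proving $|{\rm ker}\,\psi|\ge q-1$.

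The key step will be to exhibit the diagonal subgroup $D=\{(t,t,\ldots,t)\colon t\in K^*\}\le (K^*)^n$ inside ${\rm ker}\,\psi$. Indeed, for $t\in K^*$ one has
\[
\psi(t,\ldots,t)=\bigl[\bigl(t^{\sum_j a_{1j}},\ldots,t^{\sum_j a_{mj}}\bigr)\bigr]=\bigl[\bigl(t^{\delta},\ldots,t^{\delta}\bigr)\bigr]=[(1,\ldots,1)],
\]
the last equality because the scalar $t^{\delta}\in K^*$ may be cancelled in projective coordinates. Since $t\mapsto (t,\ldots,t)$ is injective, $|D|=q-1$, so $|{\rm ker}\,\psi|\ge q-1$ and therefore $|X|\le (q-1)^{n-1}$.

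I do not anticipate a real obstacle here: once the parameterization is viewed as the group homomorphism $\psi$, the argument is essentially forced, and the entire content lies in spotting the diagonal subgroup (morally, the projective redundancy created by homogeneity). The only points deserving a line of care are the well-definedness and multiplicativity of $\psi$ on projective coordinates, which is already implicit in Theorem~\ref{theorem-length}, and the observation that the cancellation above remains valid in the degenerate case $\delta=0$, where in fact $X=\{[(1,\ldots,1)]\}$ and the bound holds trivially. As an alternative one can read the inequality straight off Theorem~\ref{theorem-length}: with $b_{ij}=a_{ij}-a_{1j}$ every row sum $\sum_{j}b_{ij}$ vanishes, so the $n$-tuples with all coordinates equal supply at least $\min_i|Y_i|$ elements of ${\rm ker}\,\phi$, and then $|X|=\prod_{i}|Y_i|/|{\rm ker}\,\phi|\le \prod_{i}|Y_i|/\min_i|Y_i|\le (q-1)^{n-1}$; but the direct argument via $\psi$ is the shortest route.
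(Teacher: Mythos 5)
Your proposal is correct, and your primary argument takes a genuinely different (and somewhat cleaner) route than the paper. The paper proves the corollary by leaning on Theorem~\ref{theorem-length}: since $\sum_j b_{ij}=0$ for every $i$, the constant tuples $(j,\ldots,j)$ with $1\le j\le\gamma:=\min_i|Y_i|$ all lie in $M$, so $|M|\ge\gamma$, and then $|X|=\frac{1}{|M|}\prod_i|Y_i|\le\frac{\gamma\,(q-1)^{n-1}}{\gamma}=(q-1)^{n-1}$ --- exactly the ``alternative'' you sketch in your last sentence. Your main argument instead works with the direct parameterizing epimorphism $\psi\colon(K^*)^n\to X$ and exhibits the diagonal subgroup $D=\{(t,\ldots,t)\}$ of order $q-1$ inside $\ker\psi$, which gives $|X|=(q-1)^n/|\ker\psi|\le(q-1)^{n-1}$ without ever invoking the $Y_i$, the set $M$, or the formula of Theorem~\ref{theorem-length}. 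The underlying mechanism is of course the same in both proofs --- constancy of the column sums forces the diagonal to act trivially in projective coordinates --- but your version isolates that mechanism more transparently, is self-contained, and sidesteps the mild bookkeeping with $\gamma=\min_i|Y_i|$; it is also essentially the same observation the paper later reuses when it introduces the map $\mu\colon\mathbb{T}_{n-1}\to X$ in the minimum-distance section, where the identity $|X|=(q-1)^{n-1}/|N|$ presupposes this corollary's content. Your attention to the degenerate case $\delta=0$ and to the well-definedness of the group structure is appropriate and introduces no gaps.
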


\begin{proof}
Let $\sum_{j=1}^n a_{ij}=\alpha$ (a positive integer) for all $i=1,\ldots,m$. We note that $|Y_i|\leq q-1$ for all $i=1,\ldots,n$. Moreover

\begin{center}
$\sum_{j=1}^n b_{ij}=\sum_{j=1}^n a_{ij}-\sum_{j=1}^n a_{1j}=\alpha-\alpha=0$,
\end{center}

\noindent and then $(1,\ldots,1) \in M$. Let $\gamma={\rm min} \, \{|Y_i|:i=1,\ldots,n\}$. Therefore $(j,\ldots,j) \in M$ for all $1 \leq j \leq \gamma$ and it implies that $|M| \geq \gamma$. Thus

\begin{center}
$|X|=\frac{1}{|M|} \prod_{i=1}^n |Y_i| \leq \frac{\gamma \, (q-1)^{n-1}}{\gamma}=(q-1)^{n-1}$
\end{center}

\noindent and the claim follows.
\end{proof}

\begin{remark} \label{remark1}
If $G$ is a graph and $X$ is the algebraic toric set associated to the incidence matrix of $G$, then the sum of the elements of each column of this matrix is $\alpha=2$ and the result of the last corollary follows. Actually in this situation $|Y_i|=q-1$ for all $i=1,\ldots,n$ and then we get that in any graph $|X|=\frac{(q-1)^n}{|M|}$. Moreover in \cite[Corollary 3.8]{algcodes} it was found the exact value of $|X|$ if $G$ is a connected graph. By using this result we obtain that
$$
|M| = \left\{\begin{array}{lll}
(q-1)^2 & {\rm if} & \, G \,{\mbox{is bipartite}} \\
q-1 & {\rm if} & \, G \, {\mbox{is non-bipartite}}.
\end{array}\right.
$$
\end{remark}

On the other hand if we consider disconnected graphs then we obtain the following result.

\begin{corollary}
Let $(q,2)=1$ and $G$ be a disconnected graph with $n$ vertices and $m$ edges. If $X$ is the algebraic toric set associated to the incidence matrix of the graph $G$, then

\begin{center}
$|X|<(q-1)^{n-1}$.
\end{center}
\end{corollary}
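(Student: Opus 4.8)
The plan is to rephrase the inequality as $|M|>q-1$, where $M$ is the set appearing in Theorem~\ref{theorem-length}, and then to evaluate $|M|$ component by component.

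\emph{First step: reduction to counting $M$.} If $G$ has an isolated vertex, the monomials defining $X$ involve only the non-isolated vertices, so $X$ coincides with the toric set of the subgraph $G'$ obtained by deleting all isolated vertices; if $G'$ has $n'<n$ vertices then, since the column sums of the incidence matrix of $G'$ all equal $2$, Corollary~\ref{cor2} gives $|X|=|X(G')|\le (q-1)^{n'-1}<(q-1)^{n-1}$ (here $q$ is odd, so $q-1\ge 2$). Hence we may assume $G$ has no isolated vertices; then, as observed in Remark~\ref{remark1}, $|Y_i|=q-1$ for every $i$ and $|X|=(q-1)^n/|M|$, so it suffices to show $|M|>q-1$.

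\emph{Second step: a concrete description of $M$.} By the bijection established in the proof of Theorem~\ref{theorem-length}, $|M|=|\ker\phi|$; writing $t_j=\beta^{i_j}$, one sees that $\ker\phi$ consists exactly of the $t\in(K^*)^n$ for which there is a single scalar $\lambda\in K^*$ with $t_ut_v=\lambda$ for every edge $\{u,v\}$ of $G$ (the scalar being forced by the normalization $[(1:\cdots:1)]$). For a graph $H$ and $\lambda\in K^*$, set $N(H,\lambda)=\#\{t\in(K^*)^{V(H)}:\ t_ut_v=\lambda\text{ on all edges of }H\}$; as distinct components share no vertices, $N(G,\lambda)=\prod_{\ell=1}^s N(G_\ell,\lambda)$ for the connected components $G_1,\dots,G_s$ of $G$ (and $s\ge 2$), whence $|M|=\sum_{\lambda\in K^*}\prod_{\ell=1}^s N(G_\ell,\lambda)$.

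\emph{Third step: the local factors and assembly.} For a connected component one computes $N(G_\ell,\lambda)$ directly: if $G_\ell$ is bipartite, assigning an arbitrary value on one side of the bipartition determines the other, so $N(G_\ell,\lambda)=q-1$ for all $\lambda$; if $G_\ell$ is non-bipartite, traversing an odd closed walk forces $t$ to be the constant function with value a square root of $\lambda$, so $N(G_\ell,\lambda)$ equals $2$ when $\lambda$ is a square and $0$ otherwise — and this is where $(q,2)=1$ enters, since then the squares form an index-$2$ subgroup of $K^*$. If all components are bipartite, $|M|=\sum_\lambda(q-1)^s=(q-1)^{s+1}\ge(q-1)^3>q-1$; if $a\ge 1$ components are non-bipartite and $b=s-a$ are bipartite, the summand vanishes unless $\lambda$ is a square, giving $|M|=2^{a-1}(q-1)^{b+1}$, and since $s=a+b\ge 2$ either $b\ge 1$ (so $|M|\ge(q-1)^2>q-1$) or $b=0,\ a\ge 2$ (so $|M|\ge 2(q-1)>q-1$). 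In every case $|M|>q-1$, which yields $|X|<(q-1)^{n-1}$.

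I expect the crux to be the determination of the local factors $N(G_\ell,\lambda)$ and the realization that the projective quotient couples all components through the single scalar $\lambda$: this is what distinguishes the disconnected non-bipartite situation from the connected one treated in Remark~\ref{remark1}, and the solvability of $t_{v_0}^2=\lambda$ — equivalently the hypothesis $(q,2)=1$ — is exactly what forces the strict inequality (for $q$ even the statement fails, e.g.\ for two disjoint triangles, where $|M|=q-1$).
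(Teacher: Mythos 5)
Your proof is correct, but it takes a genuinely different route from the paper's. The paper's argument is a short existence argument: having already observed (in the proof of Corollary \ref{cor2} and Remark \ref{remark1}) that the $q-1$ diagonal tuples $(j,\ldots,j)$ lie in $M$, it exhibits exactly one additional element of $M$, namely the tuple equal to $1$ on the vertices of one connected component and to $(q+1)/2$ on all remaining vertices; since every edge lies entirely inside one of the two vertex classes, each congruence in (\ref{system}) reduces to $2-2\equiv 0$ or $(q+1)-2\equiv 0\pmod{q-1}$, so $|M|>q-1$ and the claim follows. The hypothesis $(q,2)=1$ enters there only to make $(q+1)/2$ a legitimate (and non-diagonal) exponent. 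You instead compute $|M|$ exactly, identifying it with the number of $t\in(K^*)^n$ for which $t_ut_v$ is constant over the edges and factoring this count over connected components, obtaining $|M|=(q-1)^{s+1}$ or $2^{a-1}(q-1)^{b+1}$. This is more work but buys more: it recovers Remark \ref{remark1} as the case $s=1$, gives the exact length $|X|$ for arbitrary (disconnected) graphs rather than just the strict inequality, handles isolated vertices explicitly (which the paper's identity $|X|=(q-1)^n/|M|$ tacitly excludes), and pinpoints why the statement fails for even $q$ (two disjoint triangles give equality). Your component-wise determination of $N(G_\ell,\lambda)$ and the disjointness of the $\lambda$-fibers are both argued correctly, so there is no gap.
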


\begin{proof}
Let $E=\{a_1,\ldots,a_m\}$ be the edge set of $G$, where we consider that $a_1,\ldots,a_m$ are the columns of the incidence matrix of $G$. There is no loss of generality if we consider that $\{a_1,\ldots,a_{s_1}\}$, with ${s_1}<m$, corresponds to a connected component of $G$. In the same way let $V=\{v_1,\ldots,v_n\}$ be the set of vertices of $G$ where we can suppose that $\{v_1,\ldots,v_{s_2}\}$ is the set of vertices of the connected component mentioned above with $s_2 < n$. In the proof of Corollary (\ref{cor2}) and Remark \ref{remark1} it was showed that $(j,j,\ldots,j) \in M$ for all $j=1,\ldots,q-1$ and then $|M| \geq q-1$. In this case the element

\begin{center}
$\overbrace{(1,\ldots,1}^{{s_2}-{\rm entries}},\overbrace{(q+1)/2,\ldots,(q+1)/2)}^{(n-{s_2})-{\rm entries}}$
\end{center}

\noindent belongs to $M$ and thus $|M| >q-1$. Therefore

\begin{center}
$|X|=\frac{(q-1)^n}{|M|}<(q-1)^{n-1}$,
\end{center}

\noindent and the claim follows.
\end{proof}

\subsection{Dimension}

\noindent In the following theorem we give the dimension of the projective parameterized codes arising from the algebraic toric set $X$ in terms of the dimension of the projective parameterized codes arising from the projective torus $\mathbb{T}_{m-1}$, which is well known (see \cite{GRH}).

\begin{theorem} \label{dimension1}
The dimension of the projective parameterized codes of order $d$, $C_X(d)$, is given by
\begin{equation} \label{dimension}
H_X(d)=H_{\mathbb{T}_m-1}(d)- \overline{H}(d) 
\end{equation}

\noindent for all $d \geq 0$ and where $\overline{H}$ is the Hilbert function of $I_X/I_{\mathbb{T}_{m-1}}$, i.e., $$\overline{H}(d)={\rm dim}_K \, I_X(d)/I_{\mathbb{T}_{m-1}}(d).$$
\end{theorem}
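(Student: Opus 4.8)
The plan is to exploit the containment $I_{\mathbb{T}_{m-1}} \subseteq I_X$, which holds because $X \subseteq \mathbb{T}_{m-1}$: every form vanishing on the torus vanishes on $X$. Passing to degree-$d$ graded pieces, we get $I_{\mathbb{T}_{m-1}}(d) \subseteq I_X(d) \subseteq S_d$ for every $d \geq 0$. The key observation is that the dimension count is simply additivity of dimension along this chain of subspaces.

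First I would write $\dim_K S_d = \dim_K (S_d/I_X(d)) + \dim_K I_X(d)$ and likewise $\dim_K S_d = \dim_K (S_d/I_{\mathbb{T}_{m-1}}(d)) + \dim_K I_{\mathbb{T}_{m-1}}(d)$. Subtracting, and using the identifications recalled in Section \ref{sec2} that $H_X(d) = \dim_K S_d/I_X(d)$ and $H_{\mathbb{T}_{m-1}}(d) = \dim_K S_d/I_{\mathbb{T}_{m-1}}(d)$, we obtain
\begin{equation*}
H_{\mathbb{T}_{m-1}}(d) - H_X(d) = \dim_K I_X(d) - \dim_K I_{\mathbb{T}_{m-1}}(d).
\end{equation*}
Next I would invoke the short exact sequence of $K$-vector spaces
\begin{equation*}
0 \longrightarrow I_{\mathbb{T}_{m-1}}(d) \longrightarrow I_X(d) \longrightarrow \bigl(I_X/I_{\mathbb{T}_{m-1}}\bigr)_d \longrightarrow 0,
\end{equation*}
which is exact precisely because $I_{\mathbb{T}_{m-1}}(d)$ is a subspace of $I_X(d)$; this gives $\dim_K I_X(d) - \dim_K I_{\mathbb{T}_{m-1}}(d) = \dim_K (I_X/I_{\mathbb{T}_{m-1}})_d = \overline{H}(d)$. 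Combining the two displayed identities yields $H_X(d) = H_{\mathbb{T}_{m-1}}(d) - \overline{H}(d)$, which is exactly \eqref{dimension}.

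Honestly there is no serious obstacle here; the statement is essentially a bookkeeping identity once one notices the ideal containment. The only point requiring a word of care is making sure the grading is handled correctly — that $(I_X/I_{\mathbb{T}_{m-1}})_d$ really is the degree-$d$ piece of the quotient module and coincides with $I_X(d)/I_{\mathbb{T}_{m-1}}(d)$, which follows since both ideals are homogeneous, so the quotient is a graded module and taking degree-$d$ parts is exact on the short exact sequence of graded modules. I would also remark that all quantities are finite since $S_d$ is finite-dimensional, so the subtractions are legitimate for every $d \geq 0$, not merely in the stable range.
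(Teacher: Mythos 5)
Your proposal is correct and is essentially the same argument as the paper's: both rest on the containment $I_{\mathbb{T}_{m-1}}\subseteq I_X$ and elementary dimension counting, the paper phrasing it via the surjection $S_d/I_{\mathbb{T}_{m-1}}(d)\to S_d/I_X(d)$ with kernel $I_X(d)/I_{\mathbb{T}_{m-1}}(d)$, while you reach the same identity by subtracting two rank--nullity counts and an exact sequence of the ideals themselves.
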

\begin{proof}
\noindent We know that $X \subseteq {\mathbb{T}}_{m-1}$ and then $I_{\mathbb{T}_{m-1}} \subseteq I_X$. Let $\psi$ be the following linear transformation.
\begin{eqnarray}
\psi : S_d/I_{\mathbb{T}_{m-1}}(d) \rightarrow S_d/I_X(d), \nonumber\\
f+I_{\mathbb{T}_{m-1}}(d) \rightarrow f+I_X(d).
\end{eqnarray}

This a well defined function and in fact it is a surjective linear map. Moreover ${\rm ker} \, \psi=I_X(d)/I_{\mathbb{T}_{m-1}}(d)$. Thus
$$
{\rm dim}_K S_d/I_{\mathbb{T}_{m-1}}(d)=
$$
$$
{\rm dim}_K S_d/I_X(d)+ {\rm dim}_K I_X(d)/I_{\mathbb{T}_{m-1}}(d),
$$
\noindent and the equality (\ref{dimension}) follows immediately.
\end{proof}

For the following corollary we will use $r_X, r_{\mathbb{T}_{m-1}}$ and $r_{\overline{H}}$ as the regularity indexes of $S/I_X$, $S/I_{\mathbb{T}_{m-1}}$ and $I_X/I_{\mathbb{T}_{m-1}}$, respectively.

\begin{corollary} \label{regularity}
$r_{\mathbb{T}_{m-1}}={\rm max} \, \{r_X, r_{\overline{H}}\}$.
\end{corollary}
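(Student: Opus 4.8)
The plan is to exploit the exact sequence of graded modules that underlies Theorem~\ref{dimension1}. For every $d\geq 0$ we have the short exact sequence of $K$-vector spaces
\[
0\longrightarrow \bigl(I_X/I_{\mathbb{T}_{m-1}}\bigr)_d\longrightarrow \bigl(S/I_{\mathbb{T}_{m-1}}\bigr)_d\longrightarrow \bigl(S/I_X\bigr)_d\longrightarrow 0,
\]
which gives the numerical identity $H_{\mathbb{T}_{m-1}}(d)=H_X(d)+\overline{H}(d)$ already recorded in Eq.~(\ref{dimension}). The key observation is that all three Hilbert functions are eventually equal to their Hilbert polynomials, and for $S/I_{\mathbb{T}_{m-1}}$ and $S/I_X$ these polynomials are the constants $|\mathbb{T}_{m-1}|$ and $|X|$ respectively, while the Hilbert polynomial of $I_X/I_{\mathbb{T}_{m-1}}$ is the constant $|\mathbb{T}_{m-1}|-|X|$ (which may of course be zero).

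First I would recall that $r_{\mathbb{T}_{m-1}}$ is by definition the least $p$ with $H_{\mathbb{T}_{m-1}}(d)=|\mathbb{T}_{m-1}|$ for all $d\geq p$, and similarly for the others. From the identity $H_{\mathbb{T}_{m-1}}(d)=H_X(d)+\overline{H}(d)$ we get, for $d\geq\max\{r_X,r_{\overline{H}}\}$, that $H_{\mathbb{T}_{m-1}}(d)=|X|+(|\mathbb{T}_{m-1}|-|X|)=|\mathbb{T}_{m-1}|$, hence $r_{\mathbb{T}_{m-1}}\leq\max\{r_X,r_{\overline{H}}\}$. For the reverse inequality I would use monotonicity: since $S/I_X$ is a quotient of $S/I_{\mathbb{T}_{m-1}}$, its Hilbert function is termwise $\leq$ that of $S/I_{\mathbb{T}_{m-1}}$, and both are nondecreasing (the standard fact that $H(d)\leq H(d+1)$ for these one-dimensional Cohen--Macaulay algebras, since multiplication by the regular parameter $X_m$ is injective, or directly because $C_X(d)\subseteq C_X(d+1)$ up to the usual reindexing). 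Consequently $\overline{H}(d)=H_{\mathbb{T}_{m-1}}(d)-H_X(d)$ is also nondecreasing. Therefore each of $H_X$, $\overline{H}$ stabilizes no later than $H_{\mathbb{T}_{m-1}}$ does: if $d\geq r_{\mathbb{T}_{m-1}}$ then $H_{\mathbb{T}_{m-1}}(d)=|\mathbb{T}_{m-1}|$, and since $H_X(d)\leq|X|$ and $\overline{H}(d)\leq|\mathbb{T}_{m-1}|-|X|$ always, equality $|\mathbb{T}_{m-1}|=H_X(d)+\overline{H}(d)$ forces $H_X(d)=|X|$ and $\overline{H}(d)=|\mathbb{T}_{m-1}|-|X|$. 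Hence $r_X\leq r_{\mathbb{T}_{m-1}}$ and $r_{\overline{H}}\leq r_{\mathbb{T}_{m-1}}$, giving $\max\{r_X,r_{\overline{H}}\}\leq r_{\mathbb{T}_{m-1}}$.

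Combining the two inequalities yields $r_{\mathbb{T}_{m-1}}=\max\{r_X,r_{\overline{H}}\}$, as claimed. The only subtlety worth pinning down carefully is the claim that the functions are nondecreasing and that $\overline{H}$ has the stated constant Hilbert polynomial; I expect this to be the main (minor) obstacle. Monotonicity of $H_X$ and $H_{\mathbb{T}_{m-1}}$ is clear from the Cohen--Macaulay structure invoked in Section~\ref{sec2} (division by the regular element $X_m$), and monotonicity of $\overline{H}$ then follows from the exact sequence above together with the fact that, for a submodule of a module whose quotient has nondecreasing Hilbert function, the difference of two nondecreasing sequences could in principle fail to be nondecreasing — so one should instead argue directly that $\overline{H}(d)\le|\mathbb{T}_{m-1}|-|X|$ for all $d$ and that equality propagates, which is exactly what the displayed forcing argument does without needing monotonicity of $\overline{H}$ itself. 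I would present the proof in that cleaner form: establish $r_{\mathbb{T}_{m-1}}\le\max\{r_X,r_{\overline{H}}\}$ from the additive identity, and $\ge$ from the two a priori upper bounds $H_X(d)\le|X|$, $\overline{H}(d)\le|\mathbb{T}_{m-1}|-|X|$ together with the identity evaluated at $d\ge r_{\mathbb{T}_{m-1}}$.
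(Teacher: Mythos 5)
Your overall strategy --- reading the corollary off the additive identity $H_{\mathbb{T}_{m-1}}(d)=H_X(d)+\overline{H}(d)$ from Theorem \ref{dimension1} --- is the same as the paper's, and your first inequality $r_{\mathbb{T}_{m-1}}\le\max\{r_X,r_{\overline{H}}\}$ is fine. The gap is in the reverse inequality. You correctly observe that monotonicity of $\overline{H}$ does not follow from monotonicity of $H_X$ and $H_{\mathbb{T}_{m-1}}$ (a difference of nondecreasing functions need not be nondecreasing), but the replacement you settle on --- the ``a priori'' bound $\overline{H}(d)\le|\mathbb{T}_{m-1}|-|X|$ for all $d$ --- is not a priori at all: it is precisely the nontrivial content of the corollary. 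The identity only gives $\overline{H}(d)=H_{\mathbb{T}_{m-1}}(d)-H_X(d)\le|\mathbb{T}_{m-1}|-H_X(d)$, which exceeds $|\mathbb{T}_{m-1}|-|X|$ exactly when $H_X(d)<|X|$, so nothing you have written rules out $\overline{H}$ overshooting its eventual value at small $d$. As it stands the second half of your argument is unsupported.

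The paper closes this gap with a short module-theoretic argument: multiplication by $X_1$ defines $\theta\colon I_X(d)/I_{\mathbb{T}_{m-1}}(d)\to I_X(d+1)/I_{\mathbb{T}_{m-1}}(d+1)$, and since $X_1$ vanishes at no point of $\mathbb{T}_{m-1}$, $X_1f\in I_{\mathbb{T}_{m-1}}$ forces $f\in I_{\mathbb{T}_{m-1}}$; hence $\theta$ is injective, $\overline{H}$ is nondecreasing, and being eventually equal to $|\mathbb{T}_{m-1}|-|X|$ it is bounded above by that value everywhere. Alternatively, you could prove your bound directly: the evaluation map on $\mathbb{T}_{m-1}$ has kernel $I_{\mathbb{T}_{m-1}}(d)$ and sends $I_X(d)$ into the coordinate subspace of $K^{|\mathbb{T}_{m-1}|}$ of vectors vanishing at every point of $X$, which has dimension $|\mathbb{T}_{m-1}|-|X|$, so $\dim_K I_X(d)/I_{\mathbb{T}_{m-1}}(d)\le|\mathbb{T}_{m-1}|-|X|$. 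Either ingredient completes your proof; without one of them it is incomplete.
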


\begin{proof}
Let
\begin{center}
$\theta : I_X(d)/I_{\mathbb{T}_{m-1}}(d) \rightarrow I_X(d+1)/I_{\mathbb{T}_{m-1}}(d+1)$,
\end{center}
\begin{center}
$\theta(f+I_{\mathbb{T}_{m-1}}(d))=X_1 f+I_{\mathbb{T}_{m-1}}(d+1)$. 
\end{center}
It is easy to see that $\theta$ is a well defined map, moreover it is a li\-near transformation. If $f+I_{\mathbb{T}_{m-1}}(d) \in {\rm ker} \, \theta$ then $X_1f \in I_{\mathbb{T}_{m-1}}(d+1)$. Let $[P]=[(1,t_1^{b_{21}}\cdots t_n^{b_{2n}},\ldots,t_1^{b_{m1}}\cdots t_n^{b_{mn}})] \in X$. Thus $(X_1f)(P)=0$ and then $f(P)=0$. Therefore $f \in I_{\mathbb{T}_{m-1}}(d)$ and ${\rm ker} \, \theta=I_{\mathbb{T}_{m-1}}(d)$. It implies that $\overline{H}(d) \leq \overline{H}(d+1)$ for all $d \geq 0$. By the last inequality and Eq. (\ref{dimension}) the claim follows. 
\end{proof}

\begin{remark}
By the Corollary \ref{regularity} we obtain that $r_X \leq r_{\mathbb{T}_{m-1}}$. But in \cite{GRH} it was proved that $r_{\mathbb{T}_{m-1}}=(m-1)(q-2)$. Therefore
\begin{equation}
r_X \leq (m-1)(q-2).
\end{equation} 

As we observed in section \ref{sec2}, if $\, d \geq (m-1)(q-2)$ then $H_X(d)=|X|$ and thus $C_X(d)=K^{|X|}$. Therefore from now on we will use $d < (m-1)(q-2)$.
\end{remark}

\subsection{Minimum distance}
The minimum distance has been computed in several cases associated to evaluation codes. In particular in \cite{ci-codes} it was computed when we consider projective parameterized codes arising from the projective torus. Moreover in \cite{tohaneanu} some lower bounds on the minimum distance were found coming from syzigies. In this section we are going to find an upper bound for the minimum distance of any projective parameterized code and we will find a lower bound for this kind of codes when the sum of the elements of each column of the matrix (\ref{matrix}) becomes a constant.
We consider that $X \subset \mathbb{T}_{m-1}$ because the case $X=\mathbb{T}_{m-1}$ is well known (see \cite{GRH}). Let $Y:=\mathbb{T}_{m-1} \setminus X$ and $\delta_X(d)$, $\delta_Y(d)$ and $\delta_{\mathbb{T}_{m-1}}(d)$ be the minimum distances of the pa\-ra\-me\-te\-ri\-zed codes of order $d$, $C_X(d)$, $C_Y(d)$ and $C_{\mathbb{T}_{m-1}}(d)$, respectively. The following theorem relates them.

\begin{theorem}
Let $0 \leq d < (m-1)(q-2)$. Then
\begin{equation} \label{inequalities}
\delta_X(d) \leq \delta_{\mathbb{T}_{m-1}}(d)-\delta_Y(d).
\end{equation}
\end{theorem}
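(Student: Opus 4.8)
The plan is to exploit the disjoint decomposition $\mathbb{T}_{m-1}=X\cup Y$. For $f\in S_d$ let ${\rm ev}_d^X(f)$ and ${\rm ev}_d^Y(f)$ denote the sub-vectors of ${\rm ev}_d(f)$ indexed by the points of $X$ and of $Y$; then ${\rm ev}_d^X(f)\in C_X(d)$, ${\rm ev}_d^Y(f)\in C_Y(d)$, and, since these two index sets are disjoint and exhaust all coordinates,
$$\|{\rm ev}_d(f)\|=\|{\rm ev}_d^X(f)\|+\|{\rm ev}_d^Y(f)\|.$$
Hence the whole argument reduces to producing one minimum-weight codeword of $C_{\mathbb{T}_{m-1}}(d)$ whose restrictions to $X$ and to $Y$ are both nonzero.

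First I would pick $f\in S_d$ with $\|{\rm ev}_d(f)\|=\delta_{\mathbb{T}_{m-1}}(d)$. If ${\rm ev}_d^X(f)\neq 0$ and ${\rm ev}_d^Y(f)\neq 0$, then ${\rm ev}_d^X(f)$ is a nonzero word of $C_X(d)$ and ${\rm ev}_d^Y(f)$ a nonzero word of $C_Y(d)$, so $\|{\rm ev}_d^X(f)\|\geq\delta_X(d)$ and $\|{\rm ev}_d^Y(f)\|\geq\delta_Y(d)$; therefore
$$\delta_{\mathbb{T}_{m-1}}(d)=\|{\rm ev}_d^X(f)\|+\|{\rm ev}_d^Y(f)\|\geq\delta_X(d)+\delta_Y(d),$$
which is precisely (\ref{inequalities}). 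It is essential that $f$ realize the minimum distance of the whole torus: splitting an arbitrary $f$ that is nonzero on both pieces only recovers the trivial bounds $\|{\rm ev}_d^X(f)\|\geq\delta_X(d)$, $\|{\rm ev}_d^Y(f)\|\geq\delta_Y(d)$, which do not combine with $\|{\rm ev}_d(f)\|\geq\delta_{\mathbb{T}_{m-1}}(d)$.

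The remaining point — and the main obstacle — is that such an $f$ always exists once $0\leq d<(m-1)(q-2)$. Here I would use the multiplicative group structure of $\mathbb{T}_{m-1}$: for $g=[(\gamma_1,\ldots,\gamma_m)]\in\mathbb{T}_{m-1}$ the substitution $X_i\mapsto\gamma_i^{-1}X_i$ is a degree-preserving automorphism of $S$, and its effect on $C_{\mathbb{T}_{m-1}}(d)$ is, up to a nonzero scalar, the permutation of coordinates induced by $P\mapsto gP$; in particular it maps $C_{\mathbb{T}_{m-1}}(d)$ onto itself, preserves Hamming weights (hence permutes the minimum-weight words), and carries the support of a word onto its $g$-translate. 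Since $X$ is a subgroup and $Y$ is the union of its nontrivial cosets, one shows that if every minimum-weight word of $C_{\mathbb{T}_{m-1}}(d)$ vanished on $X$ or on $Y$ then, translating by suitable coset representatives, every minimum-weight word would be supported inside a single coset of $X$; this is then ruled out in the range $d<(m-1)(q-2)={\rm reg}\,(S/I_{\mathbb{T}_{m-1}})$, where $C_{\mathbb{T}_{m-1}}(d)\subsetneq K^{|\mathbb{T}_{m-1}|}$, so by transitivity of the action no coordinate vector belongs to $C_{\mathbb{T}_{m-1}}(d)$ and $\delta_{\mathbb{T}_{m-1}}(d)\geq 2$, and where a closer look at $\delta_{\mathbb{T}_{m-1}}(d)$ and its minimum-weight words produces one meeting at least two cosets of $X$. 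This transitivity-and-regularity step is the delicate part; the rest is routine bookkeeping.
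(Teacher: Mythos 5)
Your first paragraph is, almost word for word, the paper's entire proof: take a minimum-weight word $\Lambda$ of $C_{\mathbb{T}_{m-1}}(d)$, split it into the sub-vectors $\Lambda_1$, $\Lambda_2$ indexed by $X$ and by $Y$, and write $\delta_{\mathbb{T}_{m-1}}(d)=w(\Lambda_1)+w(\Lambda_2)\geq\delta_X(d)+\delta_Y(d)$. The paper asserts this last inequality with no further comment, i.e.\ it silently assumes exactly what you single out as the main obstacle: that $\Lambda_1\neq 0$ and $\Lambda_2\neq 0$ (if, say, $\Lambda_1=0$, then $w(\Lambda_1)=0<\delta_X(d)$ and the chain of inequalities breaks). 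So you have correctly located the weak point of the argument, and up to that point your approach coincides with the paper's.

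Your proposed repair, however, does not close the gap. The translation action $X_i\mapsto\gamma_i^{-1}X_i$ does permute the minimum-weight words of $C_{\mathbb{T}_{m-1}}(d)$ and translate their supports, and it does yield $\delta_{\mathbb{T}_{m-1}}(d)\geq 2$ for $d<(m-1)(q-2)$; but that only excludes supports of size one. The case you actually must exclude is that the support of every minimum-weight word lies inside a single coset of $X$ --- a set of size $|X|$, which can be far larger than $\delta_{\mathbb{T}_{m-1}}(d)$ --- and for this you offer only the phrase that ``a closer look at $\delta_{\mathbb{T}_{m-1}}(d)$ and its minimum-weight words produces one meeting at least two cosets.'' That sentence is the entire content of the missing step, and it is not routine bookkeeping: it would require the explicit description of the minimum-weight words of the torus codes from \cite{ci-codes}, or at least a proof that $\delta_{\mathbb{T}_{m-1}}(d)$ strictly exceeds both $\delta_X(d)$ and $\delta_Y(d)$, and neither your sketch nor the paper supplies it. In short, your proposal reproduces the paper's argument and honestly flags an unacknowledged assumption in it, but the assumption remains unproved in both write-ups.
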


\begin{proof}
Let $X=\{[P_1],\ldots,[P_{|X|}]\}$. We can write $\mathbb{T}_{m-1}=\{[P_1],\ldots,[P_{|X|}],[Q_1],\ldots,[Q_{|Y|}]\}$, where of course $Y=\{[Q_1],\ldots,[Q_{|Y|}]\}$. If $$\Lambda=\left(\frac{f(P_1)}{X_1^d(P_1)},\ldots,\frac{f(P_{|X|})}{X_1^d(P_{|X|})}, \frac{f(Q_1)}{X_1^d(Q_1)},\ldots, \frac{f(Q_{|Y|})}{X_1^d(Q_{|Y|})}\right) \in C_{\mathbb{T}_{m-1}}(d)$$
with $w(\Lambda)=\delta_{\mathbb{T}_{m-1}}(d)$. We use $w(\Lambda)$ to mean the Hamming weight of the codeword $\Lambda$, then 
$$
\Lambda_1:=\left(\frac{f(P_1)}{X_1^d(P_1)},\ldots,\frac{f(P_{|X|})}{X_1^d(P_{|X|})}\right) \in C_X(d)
\,\,\,\,\mbox{and}\,\,\,\,
\Lambda_2:=\left(\frac{f(Q_1)}{X_1^d(Q_1)},\ldots,\frac{f(Q_{|Y|})}{X_1^d(Q_{|Y|})}\right) \in C_Y(d).$$

Moreover
\begin{equation} \label{inequalities2}
\delta_{\mathbb{T}_{m-1}}(d)=w(\Lambda)=w(\Lambda_1)+w(\Lambda_2) \geq \delta_X(d)+\delta_{Y}(d).
\end{equation}

Therefore the inequality (\ref{inequalities}) follows from (\ref{inequalities2}).
\end{proof}

\begin{remark}
From the inequality (\ref{inequalities}) we obtain that $\delta_X(d) \leq \delta_{\mathbb{T}_m-1}(d)-1$ for all $0 \leq d < (m-1)(q-2)$. But $\delta_{\mathbb{T}_{m-1}}(d)$ was computed in \cite{ci-codes}. Thus in this case
\begin{equation}
\delta_X(d) \leq (q-1)^{m-(k+2)}(q-1-\ell)-1, 
\end{equation}

\noindent where $k$ and $\ell$ are the unique integers such that $k\geq 0$,
$1\leq \ell\leq q-2$ and $d=k(q-2)+\ell$. 
\end{remark}

From now on we will consider the case worked in section \ref{length1}, where the sum of the elements of each column of the matrix $A$ defined in (\ref{matrix}) is a constant, i.e., $\sum_{j=1}^n a_{ij}=\alpha$ (a positive integer) for all $i=1,\ldots,m$. The following map will help us to find a lower bound for the minimum distance of the corresponding projective parameterized codes.
$$
\mu: {\mathbb{T}_{n-1}} \rightarrow X,$$
$$
\left[(t_1,\ldots,t_n)\right] \rightarrow \left[\left(t_1^{a_{11}}\cdots t_n^{a_{1n}},\ldots,t_1^{a_{m1}}\cdots t_n^{a_{mn}}\right)\right].$$ 

$\mu$ is a well defined map and in fact it is an epimorphism of multiplicative groups. Let $N:={\rm ker} \, \mu$. Thus $|N|=\frac{|\mathbb{T}_{n-1}|}{|X|}=\frac{(q-1)^{n-1}}{|X|}$. Moreover $\mathbb{T}_{n-1}=\cup_{i=1}^{|X|} N\cdot [Q_i]$ (disjoint union of the corresponding cosets) for some $[Q_i] \in \mathbb{T}_{n-1}$. Let $[P_i]=\mu([Q_i])$ for all $i=1,\ldots,|X|$ and $N=\{[R_1],\ldots,[R_{|N|}]\}$. Thus $X=\{[P_1],\ldots,[P_{|X|}]\}$ and 

\begin{center}
$\mathbb{T}_{n-1}=\{[R_1Q_1],\ldots,[R_{|N|}Q_1],\ldots,[R_1Q_{|X|}],\ldots,[R_{|N|}Q_{|X|}]\}$.
\end{center}

As in the introduction let $L=K[Z_1,\ldots,Z_n]$. We define another map that will be useful later on.
$$
\tau:S_d \rightarrow L_{\alpha d},$$ $$f(X_1,\ldots,X_m) \rightarrow f(Z_1^{a_{11}}\cdots Z_n^{a_{1n}},\ldots,Z_1^{a_{m1}}\cdots Z_n^{a_{mn}}).
$$

$\tau$ is a linear map between the vector spaces $S_d$ and $L_{\alpha d}$. Now we are able to prove the following theorem. In this result we are going to find a lower bound for the minimum distance of the corresponding projective parameterized codes.

\begin{theorem} \label{theoremlowerbound}
If the sum of the elements of each column of the matrix $A$ defined in (\ref{matrix}) is a constant $\alpha$, then
\begin{equation} \label{lowerbound}
\delta_X(d) \geq \frac{|X| \cdot \delta_{\mathbb{T}_{n-1}}(\alpha d)}{(q-1)^{n-1}},
\end{equation}

\noindent where $\delta_{\mathbb{T}_{n-1}}(\alpha d)$ is the minimum distance of the parametererized code of order $\alpha d$ arising from the projective torus $\mathbb{T}_{n-1}$ and $\delta_X(d)$ is the minimum distance of the projective parameterized code associated to the toric set $X$ defined in Eq. (\ref{toricset}).
\end{theorem}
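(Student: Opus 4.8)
The plan is to transfer the minimum-distance bound from the torus $\mathbb{T}_{n-1}$ down to $X$ via the surjection $\mu$ and the pullback map $\tau$. The key point is that $\tau$ sends a polynomial vanishing on $X$ to a polynomial vanishing on $\mathbb{T}_{n-1}$, and conversely; more precisely, for $f\in S_d$ and $[Q]\in\mathbb{T}_{n-1}$ we have $\tau(f)(Q)=f(\mu(Q))$ up to the fixed scaling by $f_0=X_1^d$, so evaluating $f$ at the points $[P_i]\in X$ "spreads out" to evaluating $\tau(f)$ at all the points $[R_jQ_i]$ of $\mathbb{T}_{n-1}$. Because $\mathbb{T}_{n-1}=\bigcup_{i=1}^{|X|} N\cdot[Q_i]$ is a disjoint union of $|X|$ cosets each of size $|N|=(q-1)^{n-1}/|X|$, and since $\tau(f)(R_jQ_i)$ depends only on $[Q_i]$ (all points in one coset $N\cdot[Q_i]$ map to the same point $[P_i]$ under $\mu$), each nonzero coordinate of the codeword $\mathrm{ev}_d(f)\in C_X(d)$ contributes exactly $|N|$ nonzero coordinates to the codeword $\mathrm{ev}_{\alpha d}(\tau(f))\in C_{\mathbb{T}_{n-1}}(\alpha d)$, and each zero coordinate contributes $|N|$ zeros.

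Concretely, first I would fix $0\neq v=\mathrm{ev}_d(f)\in C_X(d)$ achieving $w(v)=\delta_X(d)$, and form $w:=\mathrm{ev}_{\alpha d}(\tau(f))\in C_{\mathbb{T}_{n-1}}(\alpha d)$. The second step is to check $w\neq 0$: since $v\neq 0$ there is some $[P_i]\in X$ with $f(P_i)\neq 0$, hence with $[Q_i]$ a preimage, $\tau(f)(Q_i)=f(P_i)\neq 0$ (after accounting for $f_0$), so $w\neq 0$ and therefore $w(w)\geq\delta_{\mathbb{T}_{n-1}}(\alpha d)$. The third step is the counting identity: using the coset decomposition, $w(w)=|N|\cdot w(v)$, because the coordinate of $w$ indexed by $[R_jQ_i]$ is (a unit multiple of) $f(P_i)/X_1^d(P_i)$, independent of $j$. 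Combining, $|N|\cdot\delta_X(d)=|N|\cdot w(v)=w(w)\geq\delta_{\mathbb{T}_{n-1}}(\alpha d)$, and dividing by $|N|=(q-1)^{n-1}/|X|$ yields exactly the claimed inequality $\delta_X(d)\geq |X|\cdot\delta_{\mathbb{T}_{n-1}}(\alpha d)/(q-1)^{n-1}$.

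The main obstacle I anticipate is bookkeeping with the normalizing polynomial $f_0=X_1^d$ in the evaluation map, i.e., verifying that for $[Q]\in\mathbb{T}_{n-1}$ with $[P]=\mu([Q])$ one genuinely has $\tau(f)(Q)/Z^{(\alpha d)\text{-normalizer}}(Q)$ equal to a fixed nonzero scalar times $f(P)/X_1^d(P)$, so that the correspondence between nonzero coordinates is exact (not just "nonzero iff nonzero" but weight-preserving up to the factor $|N|$). This requires that the chosen normalizer for $C_{\mathbb{T}_{n-1}}(\alpha d)$ be compatible with $\tau$ applied to $X_1^d$, namely $\tau(X_1^d)=(Z_1^{a_{11}}\cdots Z_n^{a_{1n}})^d$, which is a nowhere-zero monomial on $\mathbb{T}_{n-1}$; choosing the evaluation normalizer accordingly makes the ratios transform correctly. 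A secondary point to state carefully is the well-definedness and surjectivity of $\mu$ (already asserted in the text) and that the cosets $N\cdot[Q_i]$ are honestly disjoint of equal size, so the partition of the coordinates of $w$ into $|X|$ blocks of size $|N|$ is legitimate. Once these identifications are pinned down, the rest is the short counting argument above.
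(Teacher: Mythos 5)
Your proposal is correct and follows essentially the same route as the paper: pull the minimum-weight codeword of $C_X(d)$ back through $\tau$ to a codeword of $C_{\mathbb{T}_{n-1}}(\alpha d)$, use the coset decomposition $\mathbb{T}_{n-1}=\bigcup_{i=1}^{|X|}N\cdot[Q_i]$ to get the weight identity $w(\Omega)=|N|\cdot w(\Gamma)$, and divide by $|N|=(q-1)^{n-1}/|X|$. Your extra care about the normalizer (that $\tau(X_1^d)$ and $Z_1^{\alpha d}$ differ by a nowhere-vanishing factor on the torus, so the weight count is unaffected) is a point the paper passes over silently but does not change the argument.
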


\begin{proof}
Let 
$$\Gamma=\left(\frac{f(P_1)}{X_1^d(P_1)},\ldots,\frac{f(P_{|X|})}{X_1^d(P_{|X|})}\right) \in C_X(d).$$ 

We choose $\Gamma$ in such a way that $w(\Gamma)=\delta_X(d)$. On the other hand let
$$\Omega=\left(\frac{\tau(f)(R_1Q_1)}{Z_1^{\alpha d}(R_1Q_1)},\ldots,\frac{\tau(f)(R_{|N|}Q_1)}{Z_1^{\alpha d}(R_{|N|}Q_1)},\ldots,\frac{\tau(f)(R_1Q_{|X|})}{Z_1^{\alpha d}(R_1Q_{|X|})},\ldots,\frac{\tau(f)(R_{|N|}Q_{|X|})}{Z_1^{\alpha d}(R_{|N|}Q_{|X|})}\right).$$ 

We have that $\Omega \in C_{\mathbb{T}_{n-1}}(\alpha d)$ and if $f(P_i) \neq 0$ for some $[P_i] \in X$, then due to the fact that $\mu ([R_jQ_i])=[P_i]$, we obtain that $\tau(f)(R_j Q_i)=f(P_i)\neq 0$ for all $j=1,\ldots,n$. Thus $w(\Omega)=|N|\cdot w(\Gamma)=|N|\cdot \delta_X(d)$ and thererefore $\delta_{\mathbb{T}_{n-1}}(\alpha d) \leq w(\Omega)=|N|\cdot \delta_X(d)$. Then
\begin{equation} \label{inequality2}
\delta_X(d) \geq \frac{\delta_{\mathbb{T}_{n-1}}(\alpha d)}{|N|}.
\end{equation}

The inequality (\ref{lowerbound}) follows from (\ref{inequality2}) and the fact that $|N|=\frac{(q-1)^{n-1}}{|X|}$.
\end{proof}

\vspace{0.3cm}
If $X$ is the algebraic toric set arising from the incidence matrix of any graph then $\alpha=2$ and we can apply Theorem \ref{theoremlowerbound}. Moreover if we have a connected graph, by using \cite[Corollary 3.8]{algcodes} we obtain the following general result.

\begin{corollary} \label{mdist}
Let $X$ be the algebraic toric set arising from the incidence matrix of any connected graph $G$. Then
\begin{equation} \label{mdist2}
\delta_X(d) \geq \left\{\begin{array}{lll}
\frac{\delta_{\mathbb{T}_{n-1}}(2d)}{q-1} & {\rm if} & \, G \,{\mbox{is bipartite}} \\
\delta_{\mathbb{T}_{n-1}}(2d) & {\rm if} & \, G \, {\mbox{is non-bipartite}}.
\end{array}\right.
\end{equation}
\end{corollary}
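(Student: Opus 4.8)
The plan is to simply specialize Theorem~\ref{theoremlowerbound} to the case at hand, using the value of $|X|$ recorded in Remark~\ref{remark1}. First I would observe that when $A$ is the incidence matrix of a (simple) graph $G$, every column of $A$ has exactly two nonzero entries, both equal to $1$, so the column sums are all equal to the constant $\alpha=2$. Hence the hypothesis of Theorem~\ref{theoremlowerbound} is met and we obtain
\begin{equation} \label{mdist-step}
\delta_X(d) \geq \frac{|X|\cdot \delta_{\mathbb{T}_{n-1}}(2d)}{(q-1)^{n-1}}.
\end{equation}

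Next I would invoke Remark~\ref{remark1}, which says that for any graph $|Y_i|=q-1$ for all $i$, so that $|X|=\dfrac{(q-1)^n}{|M|}$, together with the computation of $|M|$ via \cite[Corollary~3.8]{algcodes} for a \emph{connected} graph: $|M|=(q-1)^2$ if $G$ is bipartite and $|M|=q-1$ if $G$ is non-bipartite. Substituting into the fraction appearing in \eqref{mdist-step} gives $\dfrac{|X|}{(q-1)^{n-1}}=\dfrac{(q-1)^{n-2}}{(q-1)^{n-1}}=\dfrac{1}{q-1}$ in the bipartite case, and $\dfrac{|X|}{(q-1)^{n-1}}=\dfrac{(q-1)^{n-1}}{(q-1)^{n-1}}=1$ in the non-bipartite case. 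Plugging these back into \eqref{mdist-step} yields exactly the two inequalities in \eqref{mdist2}.

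The argument is essentially a one-line substitution once the two ingredients are in place, so I do not anticipate a genuine obstacle; the only points requiring care are bookkeeping ones. One should make explicit that connectedness is what makes the dichotomy "bipartite versus non-bipartite" exhaustive (this is precisely where \cite[Corollary~3.8]{algcodes} is used), and that the relevant notion of incidence matrix is the $0/1$ matrix of a simple graph, so that the column-sum constant is genuinely $\alpha=2$ and Theorem~\ref{theoremlowerbound} applies verbatim. It may also be worth noting, for completeness, that $\delta_{\mathbb{T}_{n-1}}(2d)$ on the right-hand side is the explicitly known minimum distance of the parameterized code of order $2d$ over the projective torus $\mathbb{T}_{n-1}$ (cf.\ the formula quoted in the Remark following \eqref{inequalities}), so \eqref{mdist2} is an effective lower bound.
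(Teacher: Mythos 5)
Your proposal is correct and follows exactly the paper's (implicit) argument: the sentence preceding Corollary~\ref{mdist} states that $\alpha=2$ for incidence matrices of graphs so Theorem~\ref{theoremlowerbound} applies, and that \cite[Corollary~3.8]{algcodes} (as recorded in Remark~\ref{remark1}) supplies $|X|=(q-1)^{n-2}$ in the bipartite case and $|X|=(q-1)^{n-1}$ in the non-bipartite case, after which the substitution is immediate. Your bookkeeping matches the paper's intended computation.
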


\begin{corollary}
If the sum of the elements of each column of the matrix $A$ defined in (\ref{matrix}) is a constant $\alpha$, then
\begin{equation} \label{index3}
r_X \geq \frac{|X|(q-2)(n-1)}{\alpha (q-1)^{n-1}},
\end{equation}

\noindent where $r_X$ is the regularity index of $S/I_X$.

Moreover if $G$ is a connected graph and $X$ is the algebraic toric set arising from its incidence matrix, then
\begin{equation} \label{index4}
r_X \geq \left\{\begin{array}{llll}
\frac{(q-2)(n-1)}{2(q-1)} & {\rm if} & \, G \,{\mbox{is bipartite}} \\
& & \\
\frac{(q-2)(n-1)}{2} & {\rm if} & \, G \, {\mbox{is non-bipartite}}.
\end{array}\right.
\end{equation}
\end{corollary}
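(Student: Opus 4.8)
The plan is to read off the regularity bound from the minimum-distance estimate of Theorem~\ref{theoremlowerbound}. The bridge I would use is elementary: if $\delta_X(d)\geq 2$ for some integer $d\geq 0$, then $C_X(d)\neq K^{|X|}$, because the full space $K^{|X|}$ has minimum distance $1$; hence $H_X(d)=\dim_K C_X(d)<|X|=\deg(S/I_X)$, and therefore $d<r_X$ by the description of the regularity index recalled in Section~\ref{sec2}. Consequently $r_X\geq d+1$ for every integer $d$ with $\delta_X(d)\geq 2$, so it suffices to prove $\delta_X(d)\geq 2$ for every integer $d$ strictly below $B:=|X|(q-2)(n-1)/\big(\alpha(q-1)^{n-1}\big)$, and then to specialize to graphs.

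For the first task I would invoke Theorem~\ref{theoremlowerbound}, which gives $\delta_X(d)\geq |X|\,\delta_{\mathbb{T}_{n-1}}(\alpha d)/(q-1)^{n-1}$. Since $\delta_X(d)$ is an integer, it is $\geq 2$ as soon as this lower bound is $>1$, i.e.\ whenever $\delta_{\mathbb{T}_{n-1}}(\alpha d)>(q-1)^{n-1}/|X|$. Writing $\alpha d=k(q-2)+\ell$ with $1\leq\ell\leq q-2$ and using the value $\delta_{\mathbb{T}_{n-1}}(\alpha d)=(q-1)^{n-2-k}(q-1-\ell)$ computed in \cite{ci-codes}, this becomes the inequality $|X|(q-1-\ell)>(q-1)^{k+1}$. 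The hypothesis $d<B$ bounds $k$ from above, and --- the point one must not discard --- it simultaneously forces $\ell$ to be small whenever $k$ is near its largest admissible value; combining these two constraints with $|X|\leq(q-1)^{n-1}$ from Corollary~\ref{cor2} yields $|X|(q-1-\ell)>(q-1)^{k+1}$, hence $\delta_X(d)\geq 2$. Taking $d$ to be the largest integer below $B$ then gives $r_X\geq d+1\geq B$, which is (\ref{index3}).

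For the graph statements I would simply substitute the known values of $\alpha$ and $|X|$. The incidence matrix of a graph has all column sums equal to $2$, so $\alpha=2$, and by Remark~\ref{remark1} a connected graph has $|X|=(q-1)^{n-1}$ when it is non-bipartite and $|X|=(q-1)^{n-2}$ when it is bipartite. Plugging these into $B$ produces $(q-2)(n-1)/2$ and $(q-2)(n-1)/\big(2(q-1)\big)$ respectively, which is exactly (\ref{index4}); alternatively the two bounds follow at once from Corollary~\ref{mdist} together with the bridge of the first paragraph.

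The only genuinely delicate step is the elementary estimate in the second paragraph: bounding $k$ alone and replacing $q-1-\ell$ by $1$ is too wasteful in the boundary regime where $|X|$ is close to $(q-1)^{n-1}$ and $k$ is close to $n-2$, so one has to exploit that $d<B$ controls $k$ and $\ell$ together. A handful of small values of $n$ and $q$ escape this analysis, but there $B\leq 1$ and the asserted inequality holds trivially as soon as $X$ has at least two points.
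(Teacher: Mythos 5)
Your argument is correct and follows the same route as the paper, whose own proof is the one-line assertion that the claim ``follows directly'' from Theorem~\ref{theoremlowerbound}, Corollary~\ref{mdist} and the value $(q-2)(n-1)$ of the regularity index of $\mathbb{T}_{n-1}$ --- i.e.\ exactly your bridge $\delta_X(d)\geq 2\Rightarrow d<r_X$ combined with the minimum-distance lower bound, with the numerical verification you flag as delicate left entirely implicit. That verification does go through: writing $e=k(q-2)+\ell$ one checks $e\,\delta_{\mathbb{T}_{n-1}}(e)\geq(n-1)(q-2)$ for $1\leq e\leq(n-1)(q-2)$ (using $\ell(q-1-\ell)\geq q-2$), which gives $\delta_{\mathbb{T}_{n-1}}(\alpha d)>(q-1)^{n-1}/|X|$ whenever $\alpha d<|X|(q-2)(n-1)/(q-1)^{n-1}$, so your only asserted-but-unproved step is sound and your proof is, if anything, more complete than the paper's.
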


\begin{proof}
The claim follows directly because of (\ref{lowerbound}), (\ref{mdist2}), and the fact that the regularity index corresponding to the torus $\mathbb{T}_{n-1}$ is exactly $(q-2)(n-1)$.
\end{proof}
In the first example of the following section we will realize that this lower bound is attained in some cases.
\section{Examples}
In this section we will give three different examples. In the first example we will consider a particular connected non-bipartite graph and we will compute the main characteristics of the corresponding projective parameterized codes arising from the incidence matrix of that graph. In the second example we will define clutters as particular cases of hypergraphs and a specific example of projective parameterized codes arising from uniform clutters will be given. Finally in the third example we will compute the main parameters of the projective parameterized codes associated to a matrix that does not represent a clutter and then it does not represent a graph. In these examples we will use the notation appeared in the previous sections and we will use Macaulay2 \cite{mac2} for the main computations. Also we will use $\delta'_d$ to represent the lower bound showed in (\ref{lowerbound}) and $b_d$ will represent the Singleton bound, i.e.,

\begin{center}
$\delta'_d=\frac{|X| \cdot \delta_{\mathbb{T}_{n-1}}(\alpha d)}{(q-1)^{n-1}} \,$ and $\, b_d=|X|-H_X(d)+1$.
\end{center}

In the following examples we will take $\delta'_d=1$ in the cases where $\delta'_d \leq 1$. 

\begin{figure}[!t]
\centering
\begin{picture}(260,120)
 \put(100,80){\circle*{6}}
 \put(140,80){\circle*{6}}
 \put(100,0){\circle*{6}}
 \put(140,0){\circle*{6}}
 \put(120,40){\circle*{6}}
 \curve(100,80,140,0)
 \curve(140,80,100,0)
 \curve(100,80,140,80)
 \curve(100,0,140,0)
\put(125,40){$v_1$}
\put(95,90){$v_2$}
\put(135,90){$v_3$}
\put(95,-10){$v_4$}
\put(140,-10){$v_5$}
\put(95,60){$a_1$}
\put(115,85){$a_2$}
\put(135,60){$a_3$}
\put(95,20){$a_4$}
\put(135,20){$a_6$}
\put(120,-8){$a_5$}
\end{picture}
\vspace{0.1cm}
\caption{A connected non-bipartite graph with two cycles of length $3$.}
\label{fig1}
\end{figure}
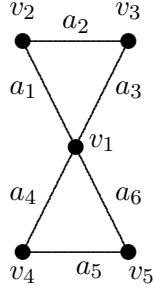

\subsection{Example 1}
Let $G$ be the graph given in Fig. \ref{fig1} where $V(G)=\{v_1,v_2,v_3,v_4,v_5\}$ is its vertex set and its edge set is given by $E(G)=\{a_1,a_2,a_3,a_4,a_5,a_6\}$. The incidence matrix of $G$ is the $5 \times 6$ matrix given by
\begin{equation}
\label{matrix2}
A=\begin{pmatrix}
1 & 0 & 1 & 1 & 0 & 1 \\
1 & 1 & 0 & 0 & 0 & 0 \\
0 & 1 & 1 & 0 & 0 & 0 \\
0 & 0 & 0 & 1 & 1 & 0 \\
0 & 0 & 0 & 0 & 1 & 1
\end{pmatrix}.
\end{equation}

Let $K=\mathbb{F}_7$ be a finite field with $7$ elements. The toric set arising from the matrix (\ref{matrix2}) (or associated to the graph $G$ showed in Fig. \ref{fig1}) is given by

\begin{center}
$X=\{[(t_1t_2,t_2t_3,t_1t_3,t_1t_4,t_4t_5,t_1t_5)] \in \mathbb{P}^5: t_i \in K^*\}$.
\end{center}

In this case we have five subsets $Y_i$ with $|Y_i|=6$ for all $i=1,\ldots,5$. The corresponding subset $M$ is

\begin{center}
$M=\{(i,i,i,i,i):i=1,\ldots,6\}$,
\end{center}

\noindent and therefore, by using Theorem \ref{theorem-length},

\begin{center}
$|X|=\frac{1}{|M|} \prod_{i=1}^5 |Y_i|=1296$.
\end{center}

We notice that $\delta'_d=\delta_{\mathbb{T}_4}(2d)$ because of Corollary \ref{mdist}. By using Macaulay2 we compute the following values.

\begin{center}
\begin{tabular}{||c||c||c||c||c||c||} \hline
  $d$ & 1 & 2 & 3 & 4 & 5 \\ \hline
  $H_X(d)$ & 6 & 21 & 55 & 120 & 231\\ \hline
  $H_{\mathbb{T}_5}(d)$ & 6 & 21 & 56 & 126 & 252\\ \hline
  $\overline{H}(d)$ & 0 & 0 & 1 & 6 & 21\\ \hline
  $\delta'_d$ & 864 & 432 & 180 & 108 & 36\\ \hline
  $b_d$ & 1291 & 1276 & 1242 & 1177 & 1066 \\ \hline
\end{tabular}
\end{center}

\begin{center}
\begin{tabular}{||c||c||c||c||c||c||} \hline
  $d$ & 6 & 7 & 8 & 9 & 10 \\ \hline
  $H_X(d)$ & 401 & 627 & 885 & 1130 & 1296\\ \hline
  $H_{\mathbb{T}_5}(d)$ & 457 & 762 & 1182 & 1722 & 2373\\ \hline
  $\overline{H}(d)$ & 56 & 135 & 297 & 592 & 1077\\ \hline
  $\delta'_d$ & 24 & 12 & 5 & 3 & 1\\ \hline
  $b_d$ & 896 & 670 & 412 & 167 & 1 \\ \hline
\end{tabular}
\end{center}

Moreover in this case the regularity index is $r_X=10=\frac{(q-2)(n-1)}{2}$ and it shows that the lower bound given in (\ref{index4}) works very well. This lower bound is attained in this particular case.

\subsection{Example 2}
In this example we continue using the notation used in the introduction.

A clutter $\mathcal{C}$ is a family $E$ of subsets of a finite ground set $Z=\{Z_1,\ldots,Z_n\}$ such that if $h_1, h_2 \in E$, then $h_1 \not\subset h_2$. The ground set $Z$ is called the vertex set of $\mathcal{C}$ and $E$ is called the edge set of $\mathcal{C}$ and they are denoted by $V_{\mathcal{C}}$ and $E_{\mathcal{C}}$ respectively.

Clutters are special hypergraphs and are sometimes called Sperner families in the literature. One example of a Clutter is a graph with the vertices and edges defined in the usual way for graphs.

Let $\mathcal{C}$ be a clutter with vertex set $V_{\mathcal{C}}=\{Z_1,\ldots,Z_n\}$ and let $h$ be an edge of $\mathcal{C}$. The characteristic vector of $h$ is the vector $a=\sum_{Z_i \in h} e_i$ where $e_i$ is the $ith$ unit vector in $\mathbb{R}^n$. Throughout this example we assume that $a_1,\ldots,a_m$ is the set of all characteristic vectors of the edges of $\mathcal{C}$. In this case the matrix (\ref{matrix}) is known as the incidence matrix of the clutter $\mathcal{C}$ and the set $X$ defined in (\ref{toricset}) is the toric set associated to the clutter $\mathcal{C}$. The clutter $\mathcal{C}$ is called uniform if the sum of the elements of the columns of its incidence matrix is a constant.

We realize that in any clutter, like in graphs, $|X|=\frac{(q-1)^n}{|M|}$ because $|Y_i|=q-1$ for all $i=1,\ldots,n$.

Let $K=\mathbb{F}_9$ be a finite field with $9$ elements and $X$ be the toric set associated to the uniform clutter ($\alpha=3$) whose incidence matrix is the $6 \times 6$ matrix given by

\begin{equation}
\label{matrix3}
A=\begin{pmatrix}
1 & 0 & 0 & 0 & 1 & 1 \\
1 & 1 & 0 & 0 & 0 & 1 \\
1 & 1 & 1 & 0 & 0 & 0 \\
0 & 1 & 1 & 1 & 0 & 0 \\
0 & 0 & 1 & 1 & 1 & 0 \\
0 & 0 & 0 & 1 & 1 & 1
\end{pmatrix}
\end{equation}

The toric set $X$ associated to (\ref{matrix3}) becomes

\begin{center}
$X=\{[(t_1t_2t_3,t_2t_3t_4,t_3t_4t_5,t_4t_5t_6,t_1t_5t_6,t_1t_2t_6)] \in \mathbb{P}^5: t_i \in K^*\}$.
\end{center}

In this case we have six subsets $Y_i$ with $|Y_i|=8$ for $i=1,\ldots,6$. The corresponding set $M$ used in Theorem \ref{theorem-length} has $512$ elements and therefore, by Eq. (\ref{length}),

\begin{center}
$|X|=\frac{1}{|M|} \prod_{i=1}^6 |Y_i|=512$.
\end{center}

In the same way that in the last example we obtain, by using Macaulay2, the following values.

\begin{center}
\begin{tabular}{||c||c||c||c||c||c||c||} \hline
  $d$ & 1 & 2 & 3 & 4 & 5 & 6\\ \hline
  $H_X(d)$ & 6 & 19 & 44 & 85 & 146 & 231\\ \hline
  $H_{\mathbb{T}_5}(d)$ & 6 & 21 & 56 & 126 & 252 & 462\\ \hline
  $\overline{H}(d)$ & 0 & 2 & 12 & 41 & 106 & 231\\ \hline
  $\delta'_d$ & 320 & 128 & 48 & 24 & 7 & 4\\ \hline
  $b_d$ & 507 & 494 & 469 & 428 & 367 & 282 \\ \hline
\end{tabular}
\end{center}

\begin{center}
\begin{tabular}{||c||c||c||c||c||c||c||} \hline
  $d$ & 7 & 8 & 9 & 10 & 11 & 12 \\ \hline
  $H_X(d)$ & 344 & 442 & 492 & 510 & 512 & 512\\ \hline
  $H_{\mathbb{T}_5}(d)$ & 792 & 1282 & 1972 & 2898 & 4088 & 5558\\ \hline
  $\overline{H}(d)$ & 448 & 840 & 1480 & 2388 & 3576 & 5046\\ \hline
  $\delta'_d$ & 1 & 1 & 1 & 1 & 1 & 1\\ \hline
  $b_d$ & 169 & 71 & 21 & 3 & 1 & 1\\ \hline
\end{tabular}
\end{center}

It is immediate from the last table that $r_X=11$.

\subsection{Example 3}
In this example we will give the main characteristics of the projective parameterized codes arising from a matrix that does not represent a clutter.

Let $K=\mathbb{F}_{11}$ be a finite field with $11$ elements and $X$ be the toric set associated to the $3 \times 4$ matrix given by
\begin{equation}
\label{matrix5}
A=\begin{pmatrix}
3 & 1 & 0 & 1 \\
0 & 4 & 2 & 2 \\
3 & 1 & 4 & 3 
\end{pmatrix}
\end{equation}

In this case $\alpha=6$ and the set $X$ becomes

\begin{center}
$X=\{[(t_1^3t_3^3,t_1t_2^4t_3,t_2^2t_3^4,t_1t_2^2t_3^3)] \in \mathbb{P}^3:t_i \in K^*\}$.
\end{center}

We have three subsets $Y_i$ with $|Y_1|=|Y_3|=10$ and $|Y_2|=5$. The corresponding subset $M$ has $10$ elements and then, by Theorem \ref{theorem-length},

\begin{center}
$|X|=\frac{1}{|M|} \prod_{i=1}^3 |Y_i|=50$.
\end{center}

By using Macaulay2 we obtain the following values.

\begin{center}
\begin{tabular}{||c||c||c||c||c||c||c||} \hline
  $d$ & 1 & 2 & 3 & 4 & 5 & 6\\ \hline
  $H_X(d)$ & 4 & 10 & 20 & 32 & 44 & 50\\ \hline
  $H_{\mathbb{T}_3}(d)$ & 4 & 10 & 20 & 35 & 56 & 84\\ \hline
  $\overline{H}(d)$ & 0 & 0 & 0 & 3 & 12 & 34\\ \hline
  $\delta'_d$ & 20 & 3 & 1 & 1 & 1 & 1\\ \hline
  $b_d$ & 47 & 41 & 31 & 19 & 7 & 1 \\ \hline
\end{tabular}
\end{center}

We conclude that $r_X=6$.

\end{document}